\newenvironment{proof}[1][Proof]{\begin{trivlist}
\item[\hskip \labelsep {\bfseries #1}]}{\end{trivlist}}
\newenvironment{remark}[1][Remark]{\begin{trivlist}
\item[\hskip \labelsep {\bfseries #1}]}{\end{trivlist}}
\newenvironment{prop}[1][Proposition]{\begin{trivlist}
\item[\hskip \labelsep {\bfseries #1}]}{\end{trivlist}}
\begin{document}

\begin{frontmatter}
\title{The Multipoint Morisita Index for the Analysis of Spatial Patterns}
\author{Jean GOLAY, Mikhail KANEVSKI, Carmen Delia VEGA OROZCO and Michael LEUENBERGER}
\address{Institute of Earth Surface Dynamics, Faculty of Geosciences and Environment, University of Lausanne, Switzerland. Contact: jean.golay@unil.ch.}

\begin{abstract}
In many fields, the spatial clustering of sampled data points has significant consequences. Therefore, several indices have been proposed to assess the degree of clustering affecting datasets (e.g. the Morisita index, Ripley's $K$-function and Rényi's generalized entropy). The classical Morisita index measures how many times it is more likely to select two sampled points from the same quadrats (the data set is covered by a regular grid of changing size) than it would be in the case of a random distribution generated from a Poisson process. The multipoint version takes into account $m$ points with $m \ge 2$. The present research deals with a new development  of the multipoint Morisita index ($m$-Morisita) for (1) the characterization of environmental monitoring network clustering and for (2) the detection of structures in monitored phenomena. From a theoretical perspective, a connection between the $m$-Morisita index and multifractality has also been found and highlighted on a mathematical multifractal set.  
\end{abstract}

\begin{keyword}
Morisita Index \sep Multifractality \sep Functional Measure \sep Spatial Point Patterns \sep Monitoring Network  
\end{keyword}
\end{frontmatter}

\section{Introduction}
The spatial clustering of sampled data points is of primary interest in many fields from epidemiology to environmental sciences. Therefore, many indices have been proposed to measure the intensity of such structures. Fundamentally, it is possible to distinguish between:

\begin{itemize}
\item topological measures such as the Voronoi polygons and the Delaunay triangulation \cite{Prepa85,KanMai04,KanPoz09}.
\item statistical measures such as Ripley's $K$-function \cite{Ripl81}, the Morisita index \cite{Mori59,Dal02,Veg2013}, the multipoint Morisita index \cite{Hul90}, Moran's Index \cite{Mora50} and the variance-to-mean ratio \cite{Hul90}.
\item fractal measures such as the box-counting method \cite{lov86,Smith89,TuiaKan08}, the sandbox-counting method \cite{Grass831,Fed88,tel89,TuiaKan08}, the lacunarity index \cite{Mand82,AlCl91,Mand94,Plo96,Smith96,Chen97,Dal02}, the information dimension \cite{Renyi56,Hent83,Seur09} and Rényi's generalized dimensions \cite{Grass832,Hent83,Pal87,Bor93,Gab05,Seur09}.
\end{itemize}  

The present research suggests a new development of the multipoint Morisita index ($m$-Morisita) and demonstrates its connection to multifractality. It then deals with the adaptation of the $m$-Morisita index to (1) the characterization of Environmental Monitoring Networks (EMN) clustering and to (2) the detection of structures in monitored phenomena.\\
  
EMN are composed of measurement sites spatially distributed to assess the intensity of environmental phenomena. In spatial planning, for instance, EMN are essential and often used as decision support tools to reduce death occurrence or to improve the general well-being of societies. Consequently, a good understanding of both the reliability of EMN and the information they provide is of paramount importance and a thorough analysis of EMN data must focus on two fundamental issues:\\

\vspace{-10pt}
\begin{enumerate}
\item When dealing with spatially continuous phenomena, a critical issue is related to the high degree of clustering of many EMN (i.e. measurement sites are distributed in space in a non-homogeneous way). It can indeed lead to regional overestimation or underestimation of risk because of the concept of preferential sampling \cite{TuiaKan08}. In order to deal with such problems and to extract representative information from data, several declustering algorithms exist \cite{KanMai04,GSLIB98}. These algorithms induce a loss of information and must not be performed blindly. That is the reason why it is of crucial importance to characterize (i.e. analyse and quantify) the degree of clustering of any EMN before attempting such operations.\\

\vspace{-10pt}
\item Another issue is related to the detection of structures in monitored phenomena. Traditionally, geostatistical tools, like variography, are used \cite{Chi99,KanMai04}, but variograms are quite sensitive to the multi-scale variability of data, the presence of extremes and outliers and the high clustering of monitoring networks. Some special techniques, like robust variography, extreme removal, non-linear transformations of data and regularized variography, can help but it needs deep expert knowledge and many empirical trials.\\
\end{enumerate}

\vspace{-30pt}
\begin{figure}[H]
\begin{center}
\includegraphics[width=13.5cm]{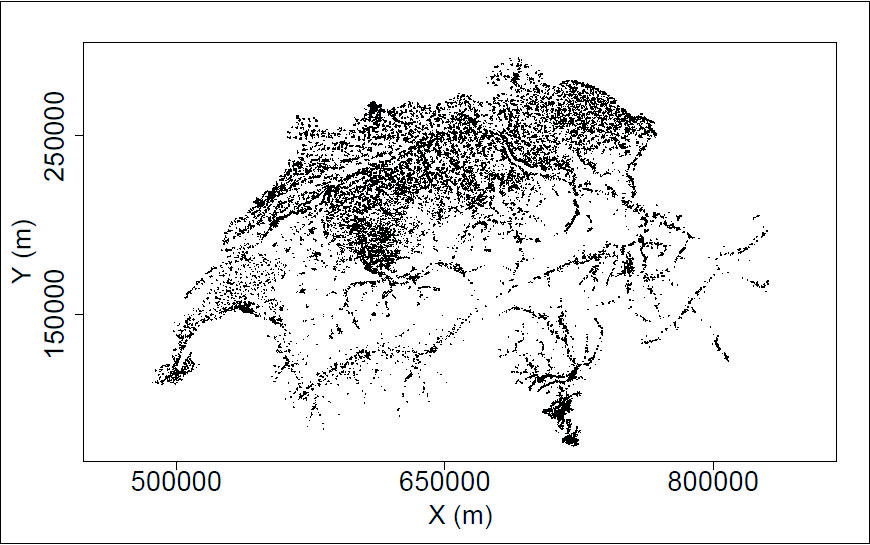}
\end{center}
\vspace{-20pt}
\caption{Postplot of the Swiss Indoor Radon Monitoring Network}
\label{RadonEMN}
\end{figure}
\vspace{-5pt}

In the present research, these critical issues are addressed with the $m$-Morisita index and two methodologies. The first one deals with the first issue and it only takes into account the spatial distribution of EMN measurement sites (i.e. the support of the measures). The second one focuses on the other issue in building a bridge between the support of the measures and the measures themselves through the concept of Functional Measures of Clustering (FMC)\cite{lov87}.\\ 

The $m$-Morisita index is introduced in Section \ref{Mindex} as a generalization of the classical Morisita index. Section \ref{MindexMulti} presents a mathematical connection linking the $m$-Morisita index to multifractality. The content of this section is an interesting contribution of the research since it enables to comprehend the good result provided by the presented index. In Section \ref{cluEMN}, the first methodology based on the $m$-Morisita index for efficiently characterizing EMN is explained and, finally, Section \ref{StrDet} introduces the second methodology for structure detection in monitored phenomena with the $m$-Morisita index. In these last two sections, the challenging case study of the Swiss Indoor Radon Monitoring Network (SIRMN), composed of 57,510 measurement sites, is considered (see Figure \ref{RadonEMN}).

\begin{figure}[H]
\begin{center}
\includegraphics[width=13.5cm]{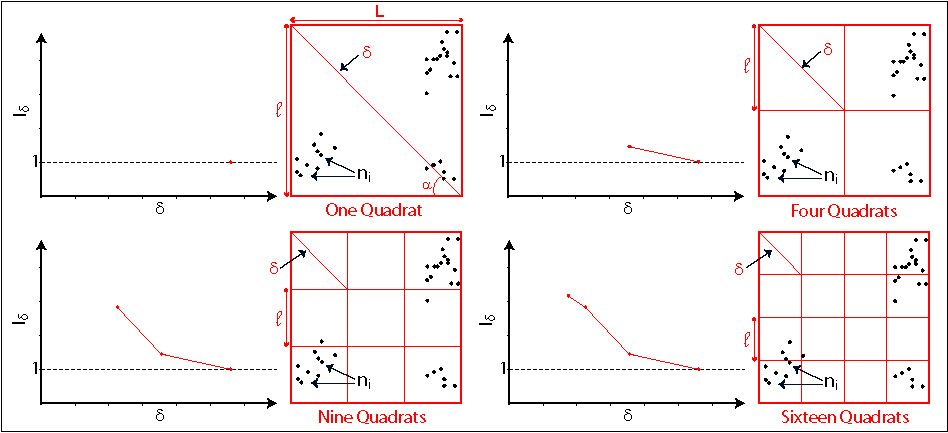}
\end{center}
\vspace{-20pt}
\caption{Computation of the Morisita index for different quadrat sizes}
\label{ComMori}
\end{figure}

\section{The $m$-Morisita index}\label{Mindex}

The classical Morisita index, $I_\delta$, for a chosen quadrat size $\delta$ (i.e. the length of the diagonal), is computed as follows:
\begin{equation}
 I_\delta=Q\hskip5pt\frac{\sum_{i=1}^Q n_i(n_i-1)}{N(N-1)}
\end{equation}
where $Q$ is the number of quadrats necessary to cover the study area, $n_i$ is the number of points in the $i^{th}$ quadrat and $N$ is the total number of points. The shape of the quadrats can be square or rectangular. In two dimensions, $Q$ and $\delta$ are related through $Q = \left( \frac{L}{\cos(\alpha)\ \delta}\right) ^2$ where $L$ and $\alpha$ are respectively the grid length and the angle of the diagonal (see Figure \ref{ComMori}). Properly, the Morisita index measures how many times more likely it is to randomly select two points belonging to the same quadrat than it would be if the points were randomly distributed (i.e. generated from a Poisson process). $I_\delta$ is first calculated for a relatively small quadrat size which is then increased until it reaches a chosen value. It is then possible to draw a plot relating every $I_\delta$ to its matching $\delta$. If the points of the pattern are randomly distributed over the study area, every computed $I_\delta$ fluctuates around the value of 1. If the points are clustered, the number of empty quadrats at small scales increase the value of the index ($I_\delta > 1$) and, finally, if the points are dispersed, the index approaches 0 at small scales \cite{KanMai04}.\\

Notice that, in Figure \ref{ComMori}, quadrats partly overlap from one scale to the next (i.e. the number of quadrats used for the computation of the index throughout the scales does not follow a geometric series). In real case studies (see Sections \ref{StrDet} and \ref{cluEMN}), it is a way of giving more importance to small scales where a change in quadrat sizes is more likely to capture the characteristics of point patterns than great changes at large scales (i.e. it is a kind of regularization). But, from a theoretical perspective, when studying, for instance, mathematical multifractal sets (see Section \ref{MindexMulti}), the number of quadrats of each grid should follow a geometric series with a common ratio $r=2$.\\

Now, the generalization of the classical formulation of the Morisita index, called $m$-Morisita, is made by considering $m$ points with $m \ge 2$ \cite{Hul90}. Strictly speaking, it refers to a family of indices and it is computed from the following formula: \begin{equation}\label{eq:mMorisita}
I_{m,\delta}=Q^{m-1}\frac{\sum_{i=1}^Q n_i(n_i-1)(n_i-2) \dotsm (n_i-m+1)}{N(N-1)(N-2) \dotsm (N-m+1)}
\end{equation}

\begin{figure}[t]
\begin{center}
\includegraphics[width=13.5cm]{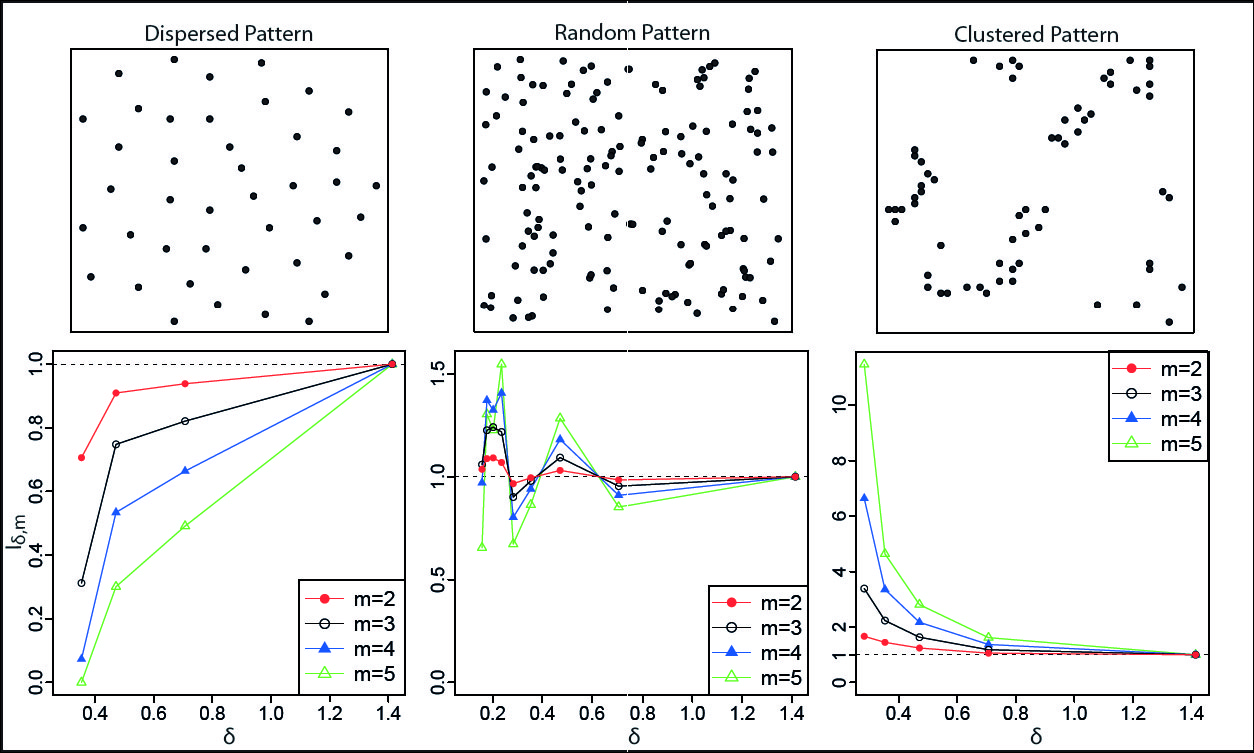}
\end{center}
\vspace{-20pt}
\caption{Results of the $m$-Morisita index for three benchmark distributions and for $m\in\lbrace 2,3,4,5\rbrace$}
\label{TheoMori}
\end{figure}

In the cases of the three benchmark distributions mentioned above (i.e. dispersed, random, clustered), the $m$-Morisita index behaves like the classical one. Nevertheless, as $m$ increases, it becomes more and more sensitive to the structure of the point patterns (see Figure \ref{TheoMori}).\\

Notice that the $m$-Morisita index such as it is defined here is conceptually different from the index thoroughly studied in \cite{Hul90} where the notion of scale is put aside to focus exclusively on the relationship between the values of $m$ and the values of the index for a fixed quadrat size.\\

It is also interesting to mention that, although the classical Morisita index has been applied in a wide range of applications, especially in environmental sciences \cite{ouc86,sha04,bon07,tui07}, the $m$-Morisita index has encountered less success. This lack of interest might have been induced by the difficulties faced in its interpretation. The next section will attempt to solve this problem by presenting the index from a new perspective.

\section{From the $m$-Morisita Index to Multifractality}\label{MindexMulti}

To overcome the interpretation complexities of the $m$-Morisita index, first, it is necessary to understand the connection between the $m$-Morisita index and multifractality. For that, the concepts of fractality and multifractality are introduced by means of the box-counting method, Rényi's generalized dimensions and the lacunarity index. This section is written from the perspective of EMN analysis.

\subsection{Fractality and Clustering} 
\guillemotleft Fractal\guillemotright \ comes from the Latin word \guillemotleft fractus\guillemotright which means \guillemotleft \ irregular\guillemotright \ or \guillemotleft fragmented\guillemotright. It was first coined by \cite{Mand67} to describe sets with abrupt and tortuous edges. A fractal has the two following properties: (1) it is self-similar (at least approximately or stochastically), which means that it reproduces the same structure throughout the scales (i.e. all the moments of the spatial distribution scale in the same way). (2) it has a fractal dimension which is strictly greater than its topological dimension \cite{Mand77, Mand82, Fal03}.\\
 
According to \cite{lov86,Salv97,TuiaKan08}, a fractal dimension can be used to analyse the clustering properties (non-homogeneity) of an EMN and must be interpreted as its dimensional resolution, i.e. the estimation of its ability to detect a phenomenon of a certain dimension in a two-dimensional space. In this context, fractal dimensions range theoretically from 0 (i.e. the topological dimension of a point) to 2 (i.e. the dimension of a geographical space). If the measurement points of an EMN are dispersed or randomly distributed within a rectangular study area, its fractal dimension is close to 2; but this value decreases as the level of clustering increases and it can reach 0 if all the points are superimposed at one location. Thus, fractal dimensions allow us to detect the appearance of clustering as a departure from a dispersed or random situation. 

\subsection{The Box-Counting Method} 
When working with finite data sets, point patterns can be self-similar only over a limited scale interval. The most popular algorithm for estimating their fractal dimension (or Hausdorff-Besicovitch dimension) is the box-counting method (also called the grid method) \cite{lov86, Smith89, TuiaKan08, Seur09}: a regular grid of $Q$ boxes is superimposed on the study area and the number $n_{box}(\delta)$ of boxes necessary to cover the whole dataset is counted; then, the box diagonal of size $\delta$ is reduced and the number $n_{box}(\delta)$ is calculated again. The algorithm goes on till a minimum $\delta$ size is reached. For a fractal point pattern, $n_{box}(\delta)$ follows a power law: 
\begin{equation}
n_{box}(\delta) \propto \delta^{-df_{box}}
\end{equation}
where $df_{box}$ is the fractal dimension measured with the box-counting method. Certainly, in most cases, real point patterns are not mathematical and $df_{box}$ must be interpreted as the dimension of the fractal set most similar to the one under study. It is then possible to consider $-df_{box}$ as the slope of the linear regression fitting the data of the plot relating $\log(n_{box}(\delta))$ to $\log(\delta)$.    

\subsection{Rényi's Generalized Dimensions}
Fractality is purely a geometrical concept and it was made explicit with the $df_{box}$ which only depends on the shape of objects. Nevertheless, in complex cases, fractal sets cannot be fully characterized by only one fractal dimension. In the case of non-marked point patterns, all the moments of the probability distribution of these so-called multifractal sets do not scale equivalently and an entire spectrum of generalized fractal dimensions $D_{q}$ is required \cite{Grass832,Hent83,Pal87,tel89,Bor93,Per06,Seur09}. In practice, for $q\neq 1$, $D_{q}$ is generally obtained with Rényi's information, $RI_{q}(\delta)$, of $q^{th}$ order (or Rényi's generalized entropy, \cite{Renyi70}) through a generalization of the box-counting method \cite{Hent83,Grass832,Pal87}:
\begin{equation} 
 RI_{q}(\delta) = \frac{1}{1-q} \ \log(\sum_{i=1}^{n_{box}(\delta)} p_{i}(\delta)^q)
\end{equation}
where $p_{i}(\delta) = n_i/N$ is the value of the probability mass function in the $i^{th}$ box of size $\delta$ and $q \in \mathbb{R}\backslash\lbrace-1\rbrace$. 

Then, for a multifractal point pattern, $\exp(RI_{q}(\delta))$ follows a power law:
\begin{equation} 
 \exp(RI_{q}(\delta)) \propto \delta^{-D_{q}}
\end{equation} and therefore   
\begin{equation}
 \label{eq:piren}\sum_{i=1}^{n_{box}(\delta)} p_{i}(\delta)^q \propto \delta^{(q-1)D_{q}}
\end{equation} where 
\begin{equation} 
 D_{q} = \lim_{\delta \to 0}\frac{RI_{q}(\delta)}{\log(\frac{1}{\delta})}
\end{equation}
For monofractal sets, $D_q$ is equal to $df_{box}$ for any order $q$, whereas, in the case of multifractal sets, $D_q$ decreases as $q$ increases (see Figure \ref{TheoSier}).\\

Finally, as complementary information, it can be noticed that: 
\begin{align}
 D_0 &= df_{box}\\
 \lim_{q \to 1}D_q &= df_{i}\\
 D_2 &= df_{sand}
\end{align}
where $df_{i}$ and $df_{sand}$ are, respectively, the information dimension \cite{Renyi56,Hent83,Seur09} and the dimension computed with the sandbox-counting method (also called the correlation dimension) \cite{Grass831,Fed88,tel89,TuiaKan08}. 

\subsection{The Lacunarity Index} 
Several monofractal sets can share the same fractal dimension $df_{box}$ and still be very different. The lacunarity index, which assesses the degree of translational invariance of a pattern, makes it possible to distinguish them. It is computed following the gliding box algorithm \cite{AlCl91,Plo96,TuiaKan08}: a box of size $\delta$ (diagonal) is superimposed on the origin of the set and the number of points $n_1$ falling into it is recorded; then, the box is moved one space further so as to partially overlap the previous location and $n_2$ is calculated. When the entire set has been covered, $\delta$ is increased and the algorithm is iterated. It must also be specified that each displacement of the box must be of the same length all along the computation. Finally, for a given $\delta$, the lacunarity index $\Lambda(\delta)$ is given by the following formula: \begin{equation}
 \label{eq:lam}\Lambda(\delta) = \frac{Z_{2}(\delta)}{Z_{1}(\delta)^2}
\end{equation}
where $Z_{q}(\delta)$ is the statistical moment of order $q$ of the probability function $P(n,\delta)$ which is equal to the number $n_{box}(n,\delta)$ of boxes of size $\delta$ containing $n$ points divided by the total number of boxes $Q$: 
\begin{equation} 
Z_{q}(\delta) = \sum_{n} n^q \frac{n_{box}(n,\delta)}{Q}  = \sum_{n} n^q P(n,\delta) = Q^{-1} \sum_{i=1}^{Q} n_i^q 
\end{equation}
Thus, if a set is highly lacunar (i.e. the mass distribution is characterized by a high variability), the lacunarity index is higher than 1. Now, for a fractal or multifractal set and for a grid made of $E$-cubes with $E$ being the dimension of the embedding space\cite{Grass832,Hent83,AlCl91}: 
\begin{align} 
 Z_{q}(\delta) &= \sum_{n} n^q \frac{n_{box}(n,\delta)}{Q}\\
 &= \frac{N^q}{Q}\sum_{i=1}^{Q} p_{i}(\delta)^q 
\end{align} which can be rewritten as follows using $\delta = \sqrt{E}\cdot\ell$ and $Q = \left(\frac{L}{\ell} \right)^E$ with $\ell$ being the length of a box edge (see Figure \ref{ComMori}):
\begin{align}
 Z_{q}(\delta) &= \frac{N^q}{Q \ \delta^E} \ \delta^E  \sum_{i=1}^{Q} p_{i}(\delta)^q \\
 &= \label{eq:proj}\frac{N^q}{L^E  \sqrt{E}^E} \ \delta^E  \sum_{i=1}^{Q} p_{i}(\delta)^q
\end{align}
Finally, $\frac{N^q}{L^E  \sqrt{E}^E}$ is a constant with regard to $\delta$. Notice that a similar conclusion can be reached for a grid made of $E$-orthotopes, but the development would involve the projection of $\delta$ onto a box edge. In practice, in order to avoid such a drawback, it is judicious to rescale the studied variables in $[0,1]$ for $E \geq 3$.\\

From equations \ref{eq:piren} and \ref{eq:proj}, the following power law can be deduced for fractal and multifractal sets: 
\begin{equation} 
 \label{eq:zp} Z_{q}(\delta) \propto \delta^{(q-1) D_{q}+E}
\end{equation} 
And, finally, equations \ref{eq:lam} and \ref{eq:zp} lead to \cite{AlCl91}: 
\begin{equation} 
 \label{eq:mindex_lac} \Lambda(\delta) = \nu \ \delta^{D_2-E}= \nu \ \delta^{-C_2}
\end{equation} 
where $C_2 = \vert D_2-E \vert$ is the co-dimension of order 2 and $\nu$ is the lacunarity parameter which entirely characterizes the lacunarity of self similar sets \cite{AlCl91}.\\

\subsection{The $m$-Morisita Index and Rényi's Generalized Dimensions}\label{linkMindex} 
It can be shown that the Morisita index $I_\delta \approx \Lambda(\delta)$ \cite{Dal02} for any pattern as long as $N \gg Q \geqslant 1$. Consequently, in the case of (multi)fractal sets, equation \ref{eq:mindex_lac} leads to the following power law: 
\begin{equation} 
 I_\delta \propto \delta^{-C_2} 
\end{equation} and 
\begin{equation} 
 \lim_{\delta \to \infty}\frac{\log{(I_\delta)}}{\log (\delta)}\approx-C_2
\end{equation} 
Now, this relationship can be extended to the $m$-Morisita index for $ m \in\lbrace 2,3,4,\dotsm\rbrace$.
\begin{remark}
For a pattern $P$ with $N$ points, with a grid made of $Q$ cells and with $H:=\underset{i}{max}\left(n_i\right)$ such that $H\gg m$, $\frac{Z_j(\delta)}{Z_m(\delta)}$ is close to $0$ \ $\forall j \in\lbrace1,2,...,m-1\rbrace$. This follows from the fact that $H\gg m \Rightarrow Z_m(\delta) = Q^{-1}\sum_{i=1}^{Q} n_i^m\gg Q^{-1}\sum_{i=1}^{Q} n_i^{m-1}=Z_{m-1}(\delta)$. 
\end{remark}
\begin{prop}
Let $P$ be a pattern with $N$ points and with a grid made of $Q$ cells such that $H\gg m$, then: $I_{m,\delta}\approx\frac{Z_m(\delta)}{Z_1(\delta)^m}$ \ $\forall m \in \lbrace2,3,\dotsm \rbrace$.
\end{prop}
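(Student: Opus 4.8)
The plan is to match the falling-factorial structure of $I_{m,\delta}$ against the power moments $Z_m(\delta)$ and $Z_1(\delta)$ by isolating, in both numerator and denominator, the leading monomial and discarding the lower-order corrections with the help of the Remark. First I would record the two elementary identities that tie the two formalisms together: $Z_1(\delta) = Q^{-1}\sum_{i=1}^Q n_i = N/Q$, so that $Z_1(\delta)^m = N^m/Q^m$, and $\sum_{i=1}^Q n_i^k = Q\,Z_k(\delta)$ for every $k$.

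For the numerator, I would expand each falling factorial $n_i(n_i-1)\dotsm(n_i-m+1)$ as a polynomial in $n_i$ of degree $m$ with leading coefficient $1$ (the signed Stirling numbers of the first kind $s(m,k)$ supply the lower coefficients). Summing over the $Q$ cells and substituting $\sum_{i=1}^Q n_i^k = Q\,Z_k(\delta)$ gives
\begin{equation}
\sum_{i=1}^Q n_i(n_i-1)\dotsm(n_i-m+1) = Q\,Z_m(\delta)\left(1 + \sum_{k=1}^{m-1} s(m,k)\,\frac{Z_k(\delta)}{Z_m(\delta)}\right).
\end{equation}
Here the Remark does the decisive work: since $H\gg m$, each ratio $Z_k(\delta)/Z_m(\delta)$ with $k<m$ is close to $0$, the parenthesis collapses to $1$, and the numerator is well approximated by $Q\,Z_m(\delta)$.

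For the denominator, I would observe that $N\geqslant H\gg m$, so that $N(N-1)\dotsm(N-m+1) = N^m\prod_{k=1}^{m-1}(1-k/N)\approx N^m$. Assembling the two approximations and inserting the prefactor $Q^{m-1}$ then yields
\begin{equation}
I_{m,\delta} \approx Q^{m-1}\,\frac{Q\,Z_m(\delta)}{N^m} = \frac{Q^m Z_m(\delta)}{N^m} = \frac{Z_m(\delta)}{(N/Q)^m} = \frac{Z_m(\delta)}{Z_1(\delta)^m},
\end{equation}
which is the claim.

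The only delicate point, and the step I would treat most carefully, is the passage from the displayed parenthesis to $1$. The $m-1$ neglected terms carry the fixed combinatorial weights $s(m,k)$, which can be large in absolute value, so the argument really requires that $m$ be held fixed while the dominance $H\gg m$ forces every ratio $Z_k(\delta)/Z_m(\delta)$ to $0$; only then does a finite linear combination of vanishing ratios remain negligible. The denominator correction $\prod_{k=1}^{m-1}(1-k/N)\to 1$ is comparatively immediate once $N\gg m$ is noted, and the factor-of-$Q$ bookkeeping is purely algebraic.
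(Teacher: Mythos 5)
Your proof is correct and takes essentially the same route as the paper: expand the falling factorials into power sums, rewrite them via $\sum_i n_i^k = Q\,Z_k(\delta)$, invoke the Remark to discard the ratios $Z_k(\delta)/Z_m(\delta)$ for $k<m$, and use $N\gg m$ to reduce the denominator to $N^m$, finishing with $Z_1(\delta)=N/Q$. The only difference is cosmetic: the paper writes generic coefficients $\alpha_k$ where you identify them as signed Stirling numbers of the first kind, and your caveat that these fixed combinatorial weights must be absorbed by the vanishing ratios is a point the paper leaves implicit.
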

\begin{proof}
\begin{align} 
 I_{m,\delta}&=Q^{m-1}\ \frac{\sum_{i=1}^Q n_i(n_i-1) \dotsm (n_i-m+1)}{N(N-1) \dotsm (N-m+1)}\\
 &=Q^{m-1} \ \frac{\sum_{i=1}^Q n_i^m + \alpha_{m-1}\sum_{i=1}^Q n_i^{m-1} + \dotsm +\alpha_1\sum_{i=1}^Q n_i}{N^m + \alpha_{m-1}N^{m-1} + \dotsm + \alpha_1 N}\\ 
 &=Q^{m-1} \ \frac{Q Z_m(\delta) + \alpha_{m-1} Q Z_{m-1}(\delta) + \dotsm + \alpha_1 Q Z_1(\delta)}{N^m\left[ 1 + \alpha_{m-1} \frac{1}{N} + \dotsm + \alpha_1 \frac{1}{N^{m-1}}\right]}\\
 &=\frac{Q^m Z_m(\delta)}{N^m} \ \frac{1 + \alpha_{m-1} \frac{Z_{m-1}(\delta)}{Z_m(\delta)} + \dotsm + \alpha_1 \frac{Z_1(\delta)}{Z_m(\delta)}}{1 + \alpha_{m-1} \frac{1}{N} + \dotsm + \alpha_1 \frac{1}{N^{m-1}}}
\end{align}
Because $\frac{Z_j(\delta)}{Z_m(\delta)}$ and $\frac{1}{N^j}$ are close to $0$ \ $\forall j \in\lbrace1,2,...,m-1\rbrace$ (respectively due to the Remark and to the fact that $N \gg m$), then: 
\begin{equation} 
 I_{m,\delta}\approx\frac{Q^m Z_m(\delta)}{N^m}=\frac{Q^m Z_m(\delta)}{(Q Z_1(\delta))^m}=\frac{Z_m(\delta)}{Z_1(\delta)^m}
\end{equation} 
\hskip10cm$\blacksquare$ 
\end{proof}
For (multi)fractal sets and for orders $q=m$, equations 18 and 26 lead to the following power law: 
\begin{equation} 
 I_{m,\delta} \propto \delta^{(m-1)(D_{m}-E)}
\end{equation} 
and 
\begin{align}
 \label{MI_Dq} \lim_{\delta \to \infty}\frac{\log{(I_{m,\delta})}}{\log (\delta)}&\approx(m-1)(D_{m}-E)\\
  &\approx -(m-1)(C_{m})\\
  &\approx -S_m
\end{align}
where $C_{m}$ is the co-dimension of order $q=m$, $S_m$ will be called the $m$-Morisita slope and the dependence between $S_m$ and $m$ will be referred to as the $m$-Morisita slope spectrum.\\

\begin{figure}[t]
\begin{center}
\includegraphics[width=13.5cm]{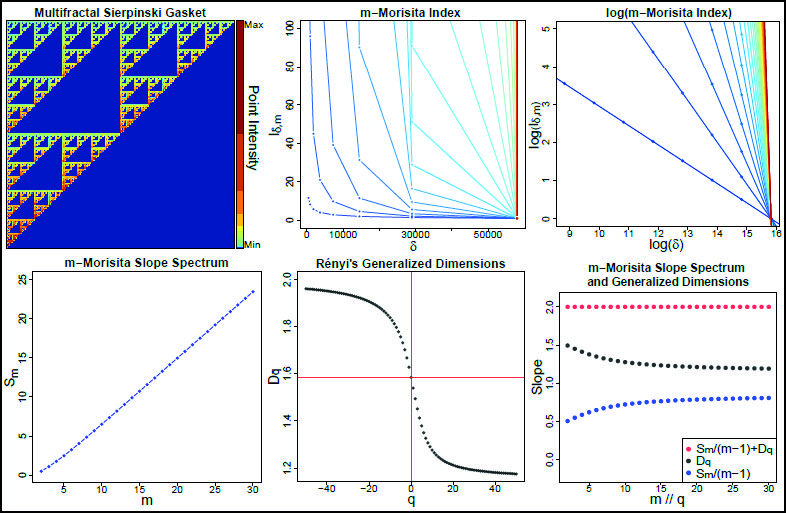}
\end{center}
\vspace{-20pt}
\caption{Illustration of the theory on a multifractal Sierpinski gasket for $m\in \lbrace 2,3,\cdots,30\rbrace$}
\label{TheoSier}
\end{figure}
 
In practice, $-S_m$ can be estimated as the slope of the linear regression fitting the data of the plot relating $\log$($I_{m,\delta}$) to $\log$($\delta$). Besides, as for $df_{box}$, $S_m$ can be used to assess the degree of clustering of point patterns. It ranges between $0$ (for regular patterns) and $(m-1)E$ (for highly clustered patterns). Again, point patterns are rarely (multi)fractal and $S_m$ must be interpreted as the slope of the (multi)fractal pattern most similar to the studied one. Moreover, although when working with spatially clustered data the condition $H\gg m$ is easily met, it is not the case when working with regular or random patterns and such distributions must be handled with care. Consequently, it is good practice to always check how well a straight line fit the data in the $\log$-$\log$ plot relating $\log$($I_{m,\delta}$) to $\log$($\delta$).\\

Finally, a multifractal Sierpinski gasket was considered (see Figure \ref{TheoSier}) and the different measures presented in this section were applied on this theoretical point set. The bottom-right panel illustrates the relationship of equation \ref{MI_Dq}.

\section{Clustering Characterization of Environmental Monitoring Networks}\label{cluEMN}

In this section, a complete methodology for the analysis and quantification of clustering intensity with the $m$-Morisita index is presented. A good methodology must be able to take into account both natural (e.g. shapes of lakes and forests) and administrative constraints (e.g. the finiteness and irregularities induced by administrative borders) which might be irrelevant to a monitored phenomenon. These considerations are important because an ideal EMN filling randomly or dispersedly a study region delimited by tortuous borders would appear clustered to most measures of clustering, although the representativeness of the collected data could be good (i.e. could not be improved with declustering algorithms).

\begin{figure}
\begin{center}
\includegraphics[width=13.5cm]{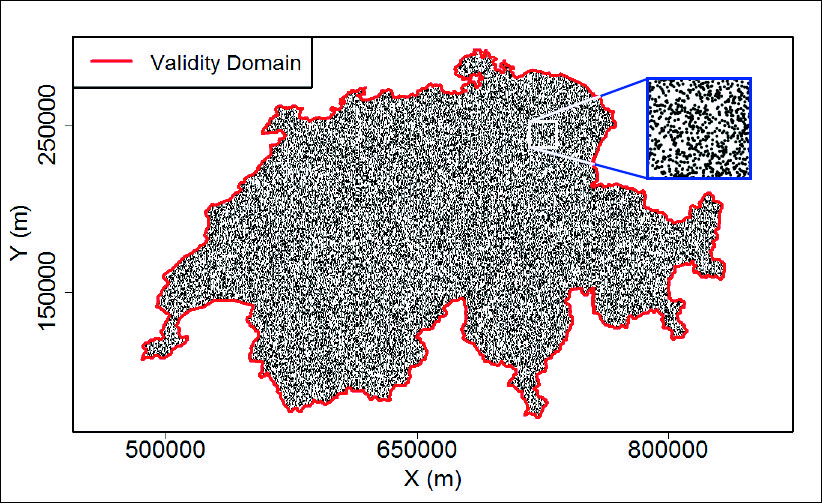}
\end{center}
\vspace{-20pt}
\caption{The retained validity domain and one of the random point distribution used for comparison}
\label{VD}
\end{figure}

\subsection{The Proposed Methodology} 
The present methodology involves Monte-Carlo simulations along with the concept of validity domain and avoids resorting to complicated edge effect corrections \cite{Ripl81}. It is subdivided into four steps \cite{KanMai04}:
\begin{enumerate}
 \item Selection of a dataset provided by an EMN composed of $s$ measurement sites.
 \item Selection of a validity domain (i.e. space of interest).
 \item Generation of many random patterns within the validity domain (i.e. Monte Carlo simulations) generated from a uniform distribution. Each simulation must be composed of $s$ points. In this way, a reference level of clustering along with a confidence level can be later obtained \cite{Ill08}.
 \item The $m$-Morisita index (Equation \ref{eq:mMorisita}) is applied to both the raw and simulated data and the different results are analysed and compared. If needed, statistical tests can be conducted \cite{BesDig77}.    
\end{enumerate}

Here, as mentioned in the introduction, the challenging case study of the SIRMN was used (see Figure \ref{RadonEMN}) and only inhabited dwellings and ground floor levels were taken into account. The validity domain was delimited using the administrative borders of Switzerland. One hundred random point distributions were simulated within the limit of the validity domain (see Figure \ref{VD}) and, finally, the $m$-Morisita index was applied to both raw and simulated patterns for $m \in \lbrace 2,3,4,5\rbrace$.

\begin{figure}
\begin{center}
\includegraphics[width=13.5cm]{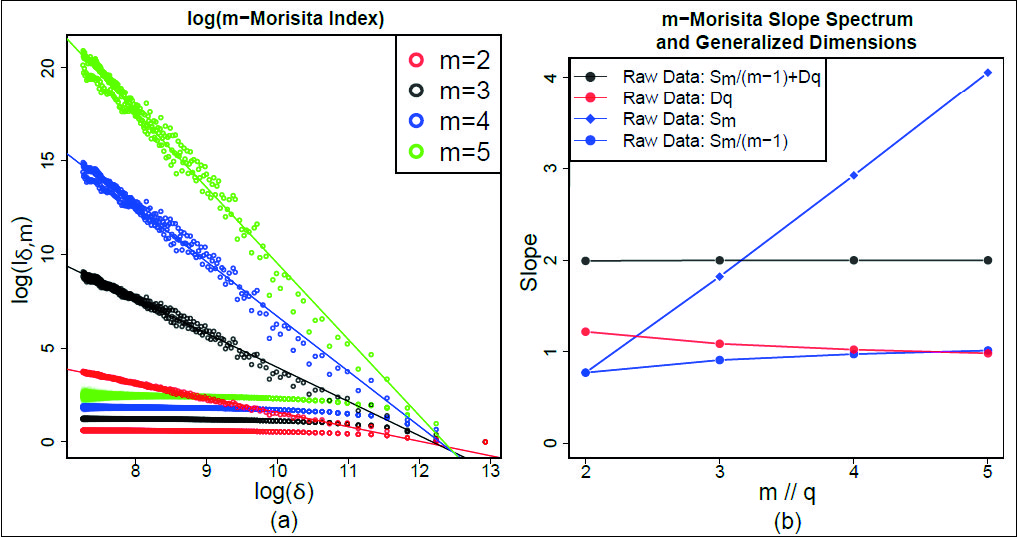}
\end{center}
\vspace{-20pt}
\caption{(a) SIRMN characterization by the $m$-Morisita index. (b) Link to Rényi's generalized dimensions. The "log" refers to "natural logarithm" and $\delta$ is measured in meters.}
\label{res1}
\end{figure}

\subsection{Results}
The results are displayed in Figure \ref{res1}. For the raw pattern, a fan of four point clouds corresponding to each $m$ is obtained in the plot relating $\log$($I_{m,\delta}$) to $\log(\delta)$ (see Figure \ref{res1} (a)). Their general behaviour can reasonably be approximated by four lines whose slopes respect the relationship given by equation \ref{MI_Dq} (see Figure \ref{res1} (b)). As $m$ increases, the $m$-Morisita index becomes more and more sensitive to the distinctive features of the pattern. This is highlighted by the increase in $S_m$ values of the $m$-Morisita slope spectrum. Next, the point clouds provided for the 100 simulated patterns are displayed all together in the bottom part of the $\log$-$\log$ plot. For each $m$, as $\log(\delta)$ decreases, a quick evolution to a steady state is observed, which is coherent with random patterns covering entirely a study area delimited by a validity domain. As $m$ increases, it also becomes easier to distinguish between the behaviour of the raw and simulated patterns. Besides, since the results of the raw pattern don't fall into the distribution sketched by the simulations, there is no need to resort to statistical tests to claim for the statistical significance of the observed differences. Consequently, a declustering algorithm should be performed before moving to the modelling phase in order to avoid local overestimations or underestimations of radon concentrations.\\

Finally, the use of the $m$-Morisita index is particularly relevant in cases where $m>2$ is required to detect differences. The next subsection shows such an example using simulated data. 

\begin{figure}[!ht]
\begin{center}
\includegraphics[width=13.5cm]{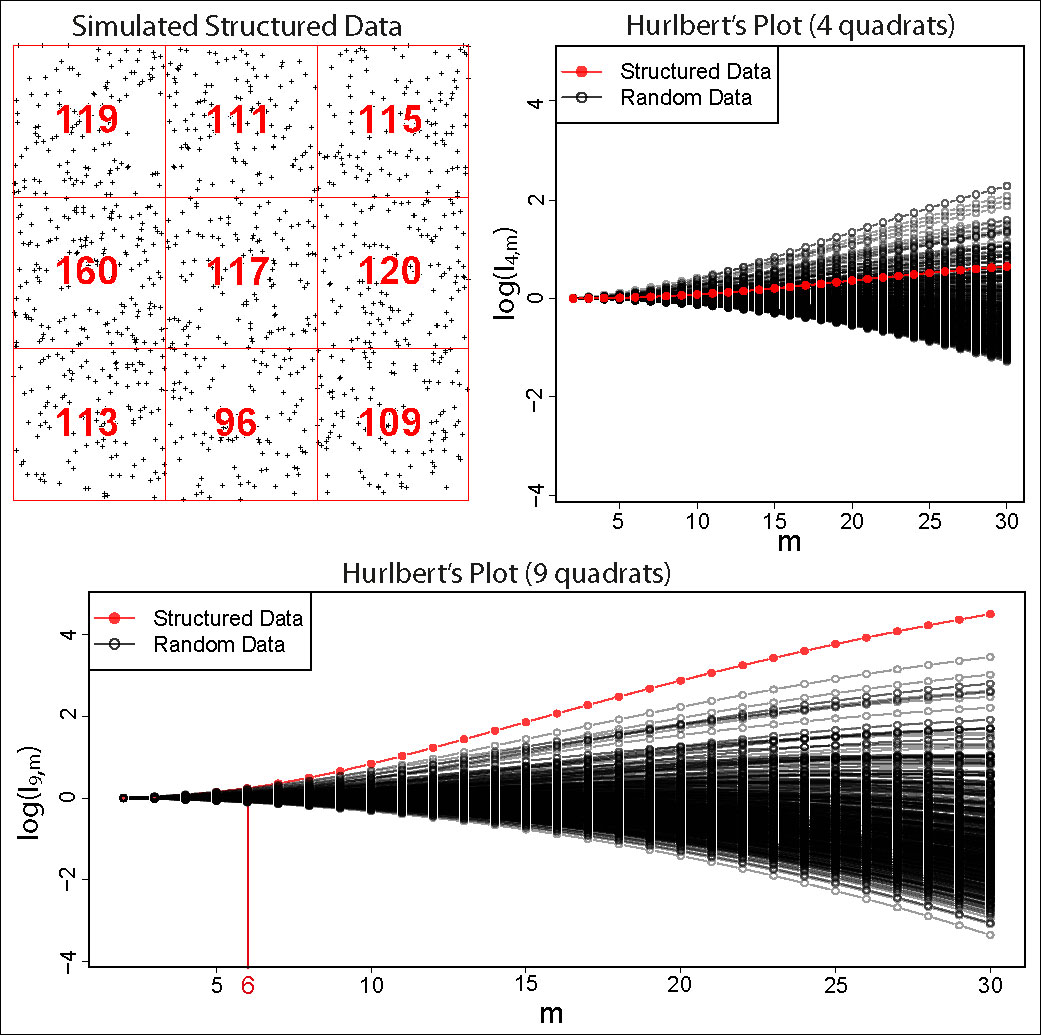}
\end{center}
\vspace{-20pt}
\caption{Simulated data with a known structure and Hurlbert's corresponding plots for two different scales expressed in number of quadrats.}
\label{HurlbertPlot}
\end{figure}

\subsection{Further Considerations Based on Simulated Data}
In this subsection, an application of the $m$-Morisita index to simulated data with a known structure is proposed. The purpose is to emphasize the importance of using both increasing $m$ and different $\delta$ to capture departures from random situations.\\

The simulated point pattern with a known structure used in this subsection is displayed in the top-left plot of Figure \ref{HurlbertPlot}. It was generated as follows using an R package called Spatstat \cite{BaddeTurner05}:
\begin{enumerate}
\item 1000 points were generated from a uniform distribution within a square.
\item A grid of nine quadrats was overlaid over the pattern (in red in the top-left plot of Figure \ref{HurlbertPlot})
\item 60 additional points were generated from a uniform distribution inside one of the nine quadrats (randomly selected). In the top-left plot of Figure \ref{HurlbertPlot}, the selected quadrat happened to be the middle one of the left column with 160 points.
\end{enumerate}

$I_{m,\delta}$ was computed on the resulting pattern for $m\in \lbrace 2,3,\cdots,30\rbrace$ and for two different $\delta$ corresponding to two grids made of four and nine quadrats respectively. The dependence between $I_{m,\delta}$ and $m$ was recorded separately for each $\delta$. The same was done with 500 random patterns (uniform distribution) made of 1060 points and the results are displayed in Hurlbert's plots \cite{Hul90} of Figure \ref{HurlbertPlot}. At the scale corresponding to the grid of four quadrats (see Figure \ref{HurlbertPlot} top-right), the behaviour of the structured pattern does not depart from that of the random patterns (i.e. the red curve is completely included into the envelope sketched by the black ones). In contrast, for the grid of nine quadrats, the red curve of the structured pattern extricates itself from the set of black curves, but only for $m\geqslant 6$.\\

Finally, this example reinforces the importance of $I_{m,\delta}$ such as it is defined in Equation \ref{eq:mMorisita}. With the SIRMN study case, it was already shown that the sensitivity of the index increased with $m$. Here, in addition, it is highlighted that this increasing sensitivity, along with the concept of scale, is of paramount importance. Indeed, the known structure of the simulated pattern is clearly detected only for $m \geqslant 6$ and for only one of the two tested scales.

\section{Structure Detection Using Functional Measures of Clustering}\label{StrDet}
The purpose of this section is to introduce a simple methodology using the $m$-Morisita index to detect structures in monitored phenomena when traditional geostatistical tools are hard to implement. The fundamental idea is to compare the spatial clustering of reference random patterns (produced by shuffling the original one) with the raw data clustering (i.e. clustering of the measurement sites) at different levels of the measured function (i.e. radon concentration) by applying different thresholds to the raw data, i.e. by performing FMC \cite{lov87}. Visually, the proposed methodology aims at quantifying the difference between the two maps displayed in Figure \ref{3D}.\\

\begin{figure}[!ht]
\begin{center}
\includegraphics[width=13.5cm]{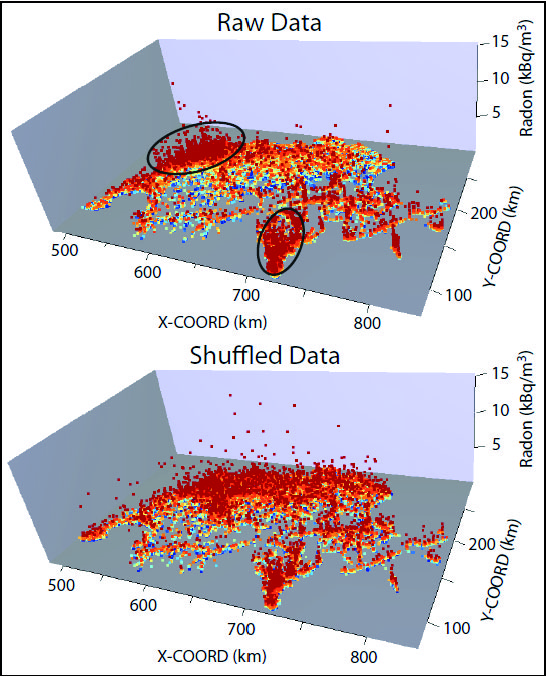}
\end{center}
\vspace{-20pt}
\caption{Indoor radon concentrations in Switzerland and shuffled data}
\label{3D}
\end{figure}

In this section, the specificity of the suggested methodology is explained step-by-step and the main results are analysed and discussed. The considered data are the same as those of the previous section (i.e. the SIRMN).

\subsection{The Proposed Methodology}           
The proposed methodology is based on the $m$-Morisita slope and FMC. It is subdivided into five steps:
\begin{enumerate}
\item Selection of a spatial dataset provided by an EMN (i.e. the raw dataset).
\item Perform many shufflings of the variable of interest to generate the shuffled datasets. In details, it consists in separating the variable of interest from the location coordinates. Then, the values of the variable are shuffled before being put back to the coordinates. Each time the operation is iterated, a new shuffled dataset is produced.  
\item Quantiles (i.e. quintiles or deciles depending on the number of points) of the studied variable are used as thresholds to split up the raw and shuffled datasets. 
\item $S_m$ is used to estimate the degree of clustering of the raw and shuffled datasets above each threshold (i.e. application of the functional $m$-Morisita index to the raw and shuffled datasets). This step requires that condition $H\gg m$ be respected for each subset (see Subsection \ref{linkMindex}).
\item If needed, statistical tests of significance can be conducted based on the distribution sketched by the shuffled datasets.    
\end{enumerate}

In the SIRMN study case, the decile thresholds were applied; 100 hundred shuffled datasets were generated and the $m$-Morisita slope was used with $m\in\lbrace2,3,4,5\rbrace$.

\subsection{Results} 
From a general perspective, the results provided by the functional $m$-Morisita index (see Figure \ref{res2_3}) show that the clustering intensity increases with thresholds. This is in accordance with what is visible in Figure \ref{3D}: the highest values of radon concentrations are also the more clustered and they mainly accumulate within the two highlighted Swiss regions \begin{figure}[h]
\begin{center}
\includegraphics[width=13.5cm]{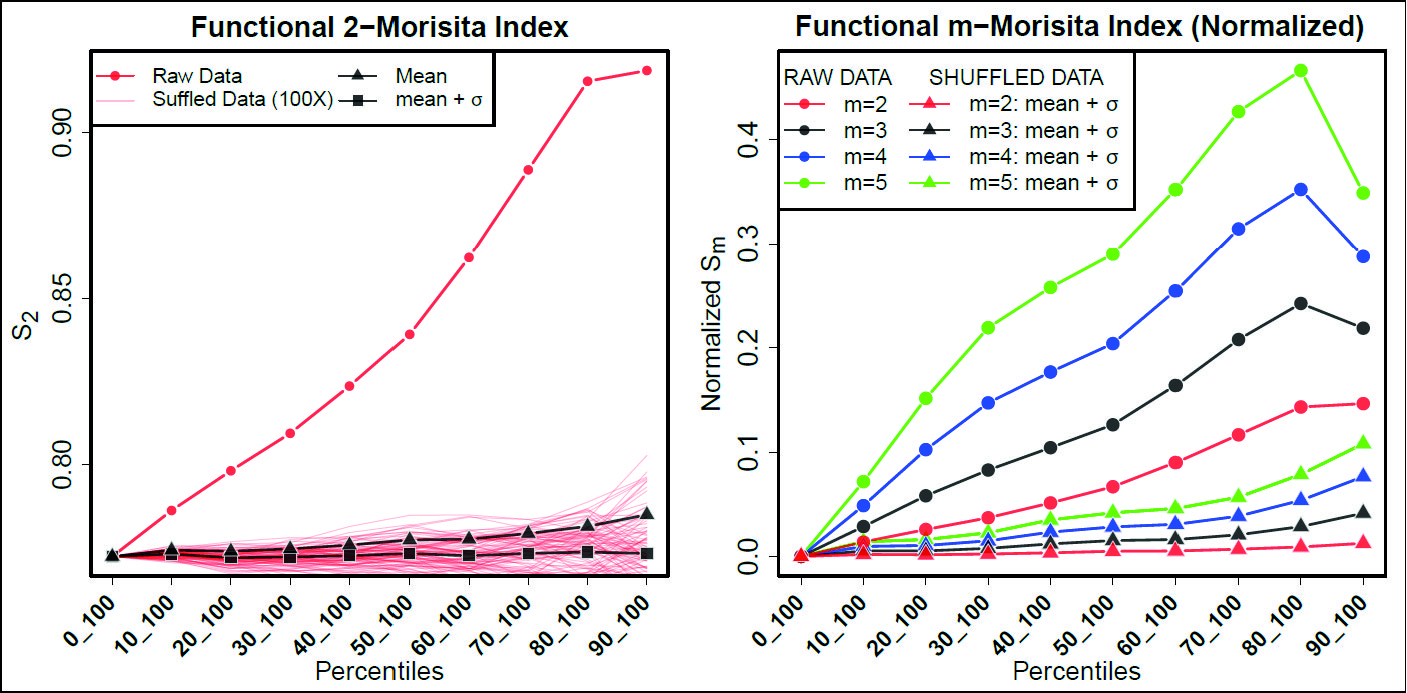}
\end{center}
\vspace{-20pt}
\caption{Results for $m$=2 (left) and result comparison for $m\in\lbrace2,3,4,5\rbrace$(right). $\sigma$ refers to the standard deviation of the distributions sketched by the shuffled datasets for each thresholds.
\label{res2_3}}
\end{figure}Moreover, whatever $m$, the detected structure is statistically significant, since the evolution observed for the raw data does not fall into the distribution sketched by the shuffled datasets. Again, as $m$ increases, the functional $m$-Morisita index becomes more and more sensitive to the underlying structure of the indoor radon distribution. The ergodic fluctuations between the different simulations are also better captured with bigger values of $m$. This can be easily noticed through the evolution of the standard deviations with thresholds (see Figure \ref{res2_3}-right). But, in spite of this observation, the distance between the raw data lines and those of the standard deviations increases with $m$ and, consequently, the efficiency of the index in terms of structure detection follows the same progression.   

\section{Conclusion}
The $m$-Morisita index, such as defined in this paper, is a new powerful tool for the analysis of spatial patterns. Its close relationship to Rényi's generalized dimensions allows us to gain a deeper understanding of its behaviour when applied to complex point distributions. In practice, it is straightforward to use it through a simple methodology to characterize the degree of clustering of EMN measurement sites (i.e. the support of the measures). The results highlights the importance of considering multiple scales and shows that the sensitivity of the index increases with $m$. Based on its relationship to multifractality, the $m$-Morisita index can also be adapted to structure detection in monitored phenomena (i.e. the measures) through a second methodology based on FMC. From this perspective as well, the results benefit from the above-mentioned assets of the index (i.e. increasing sensitivity with $m$ and integration of multiple scales) and emphasize its high potential when applied to complex case studies. Besides, this second methodology is conceptually interesting, since it builds a bridge between the support of the measures and the measures themselves: the degree of clustering of the measurement sites is computed at different intensity thresholds of the monitored phenomena, which gives an insight into the spatial dependence of the measures.\\

In future research, the possibility of developing a new $m$-Morisita index for $m \in \mathbb{R}$ will be studied. The idea is to explore further the connection to Rényi's generalized dimensions. The use of multiple $m$ to extract information regarding the average size of clusters will be developed as well and new challenging case studies in high dimensional spaces will be considered. Finally, the influence of optimization methods on the results will be analysed. A special attention will be paid to methods involving information on both the support of the measures and the measures (e.g. methods based on conditional stochastic simulations in geostatistics and active learning methods using machine learning algorithms \cite{TuiaPoz13}).

\section{Acknowledgements}
The research was partly supported by the Swiss NSF project No. 200021-140658: "Analysis and modelling of space-time patterns in complex regions". The authors also want to thank the anonymous reviewers for their constructive comments and the Swiss federal office of public health for providing the radon data.

\bibliographystyle{elsarticle-num}
\bibliography{References}
\end{document}